\documentclass[12pt,twoside]{article}
\usepackage{amsmath,amsfonts,amssymb,amsthm}
\usepackage{bbm}
\usepackage{graphicx}
\usepackage{hyperref}
\usepackage{color}
\oddsidemargin=0cm
\evensidemargin=0cm
\topmargin=-1.5cm
\textheight 23cm

\textwidth 15cm
\setlength{\parindent}{0cm}

\newcommand{\grad}{\nabla}

\newcommand{\laplace}{\Delta}

\renewcommand{\div}{\grad\cdot}

\newcommand{\R}{\mathbbm{R}}

\newcommand{\Ha}{\ensuremath{\mathcal{H}}}

\newcommand{\eps}{\varepsilon}

\newtheorem{theorem}{Theorem}
\newtheorem{lemma}{Lemma}
\newtheorem{remark}{Remark}
\newtheorem{definition}{Definition}

\newcommand{\tacka}{\, \cdot\,}


\begin{document}

\title{Maximal mixing by incompressible fluid flows}

\author{Christian Seis\footnote{Department of Mathematics, University of Toronto, 40 St.\ George Street, M5S 2E4, Toronto, Ontario, Canada}}

\maketitle

\begin{abstract}We consider a model for mixing binary viscous fluids under an incompressible flow. We proof the impossibility of perfect mixing in finite time for flows with finite viscous dissipation. As measures of mixedness we consider a Monge--Kantorovich--Rubinstein transportation distance and, more classically, the $H^{-1}$ norm. We derive rigorous a priori lower bounds on these mixing norms which show that mixing cannot proceed faster than exponentially in time. The rate of the exponential decay is uniform in the initial data.
\end{abstract}

\section{Introduction}

The present manuscript is concerned with optimal stirring strategies for binary mixtures of incompressible viscous fluids. More precisely, we study decay rates of certain mixing norms with respect to a constrained velocity field. We focus on passive scalar mixing, which means that the feedback of the transported quantity on the flow field is negligible. To model the binary mixture, we consider an indicator function $\rho=\rho(t,x)$ which takes the values $+1$ and $-1$ only, so that the sets $\{\rho=1\}$ and $\{\rho=-1\}$ represent the regions in which the fluid consists of component `A' and component `B', respectively. As usual, $t$ and $x$ are the time and space variable, respectively. The stirring velocity field will be denoted by $u=u(t,x)$, and we assume this vector field to be smooth. The transport of the passive scalar by the incompressible flow is then described by the system
\begin{eqnarray}
\partial_t \rho +u\cdot\grad\rho &=&0,\label{8}\\
\div u&=&0,\label{9}
\end{eqnarray}
and we impose the initial condition $\rho(0,x)=\rho_0(x)\in\{\pm1\}$. For mathematical convenience, we finally assume that all quantities are periodic in the spatial variables with period cell $[0,1)^d$. Observe that the total mass of each species is preserved under the flow. In the case of a critical mixture, that is the case where both components have the same volume fraction and which we assume in this paper, this means
\[
\int \rho(t,\tacka)\, dx\;=\;\int\rho_0\, dx\;=\;0\quad\mbox{for all }t>0.
\]

\medskip

A natural question, first formulated in \cite{LinThiffeaultDoering11}, addresses the optimality of stirring strategies: How efficient can a binary fluid be mixed by an incompressible flow? Classical measures of the mathematical quantification of the degree of mixedness are negative Sobolev norms, with the homogeneous $H^{-1}$ norm leading the way --- not only because of its analytical accessibility. While any $\dot H^{-s}$ norm with $s>0$ can be considered as a mixing norm, the $\dot H^{-1}$ is distinguished as it scales like a length, and thus, the actual value of this norm relates to the typical size of one-component (i.e., unmixed) regions in the mixture. Negative Sobolev norms measure oscillations of the indicator function: The larger the typical length scales, the larger the negative Sobolev norm. The actual value of $s$ corresponds to a weight in the measurement of the length scales. On the upper end, the $H^0$ (i.e., $L^2$) norm is the variance, which is conserved under the flow \eqref{8}\&\eqref{9}, and reveals thus no information on the domain morphology. In the context of mixing, negative Sobolev norms were used as large-scale mixing measures in \cite{MathewMezicPetzold05,DoeringThiffeault06,ShawThiffeaultDoering07,LinThiffeaultDoering11}. Alternative measures of the degree of mixedness can be borrowed from the theory of optimal transportation: Monge--Kantorovich--Rubinstein (MKR) distances can be used to measure the ``distance'' between two configurations, for instance, the current state and the perfectly mixed state. These MKR distances have been successfully used as measures of mixedness in related studies on demixing problems \cite{KohnOtto02,OttoRumpSlepcev06,Slepcev08,BOS,OSS}. We will review the definition of the MKR mixing measure in Subsection \ref{S2} below. For more notions of mixing measures, we refer to \cite{LinThiffeaultDoering11,Lunasin12} and references therein.

\medskip

Any quantitative statement about the efficiency of stirring strategies necessitates to specify certain constraints on the flow, for instance, by fixing the budget for the kinetic energy $\|u\|_{L^2}$, the viscous dissipation $\|\grad u\|_{L^2}$, or the palenstrophy $\|\laplace u\|_{L^2}$. Under such constraints, several optimal control problems were proposed to study effective mixing, e.g., \cite{MathewMezicGrivopoulos07,Liu08,BubanovCortelezzi10}. In practice, however, solving optimal control problems may be difficult, so different approaches that are suitable also for numerical studies are desirable. The authors in \cite{LinThiffeaultDoering11,Lunasin12} seek to find optimal stirring strategies by choosing the velocity field $u$ in each time step in direction of the steepest descent of the mixing norms. Their strategy also includes identifying absolute limits on how fast perfect mixing could ever be achieved subject to given constraints on the stirring flow. Here, the choice of the constraint is crucial: While optimal stirring strategies with a fixed energy budget $\|u\|_{L^2}$ yield to perfect mixing in finite time \cite[Sec.\ 2]{Lunasin12}, the $\dot H^{-1}$ mixing norm cannot decay faster than exponentially when stirring under a palenstrophy constraint $\|\laplace u\|_{L^2}$, cf.\ \cite[Sec.\ 3]{Lunasin12}. Such statements on the limit of stirring strategies result from rigorous {\em a priori} lower bounds on the mixing norm. In the present work, we address mixing with a prescribed viscous dissipation $\|\grad u\|_{L^2}$ (or even more general $\|\grad u\|_{L^p}$ with $p\in(1,\infty]$). This case is most natural for engineering applications where the focus is on overcoming viscous dissipation to maintain stirring. We derive rigorous lower bounds on both MKR mixing measures and $\dot H^{-1}$ norms showing that these cannot decay faster than exponentially in time. The bounds are in agreement with the exponential rates  observed during direct numerical simulations of optimal stirring, as reported in  \cite{LinThiffeaultDoering11}.

\medskip

The remainder of this paper is organized as follows: In Section \ref{S2} we introduce the MKR mixing measure and we present our main results. Section \ref{S3} contains the proofs. In the appendix, we finally discuss an important tool in our analysis, namely weighted local maximal functions.

\section{Mixing norms and main results}\label{S2}

Our primary result, the impossibility of perfect mixing in finite time, comes as a lower bound on an MKR mixing measure. Before stating the theorem, we provide the basic notation. Since MKR distances are the subject of the theory of optimal transportation, which is itself a wide and rapidly growing area of research, we will be unobtrusive in compiling general facts on MKR distances and refer the interested reader to the excellent monographs \cite{Villani03,Villani09}. 

\begin{definition}[Monge--Kantorovich--Rubinstein distance]\label{D2}
Let $\rho_0,\, \rho_1: [0,1)^d\to [0,\infty)$ be two periodic probability densities, i.e.,
\[
\int \rho_0\, dx\;=\;1\;=\;\int \rho_1\, dy.
\]
We denote by $\Gamma(\rho_0,\rho_1)$ the set of all probability measures on the product space $[0,1)^d\times [0,1)^d$ with marginals $\rho_0$ and $\rho_1$, i.e.,
\[
\iint \zeta(x)\, d\pi(x,y)\;=\; \int\zeta \rho_0\, dx, \quad \iint \zeta(y)\, d\pi(x,y)\;=\; \int\zeta \rho_1\, dy,
\]
for all periodic continuous functions $\zeta$ on $[0,1)^d$. Let $\sigma:[0,\infty)\to \R\cup\{-\infty\}$ denote a continuous function. The {\em Monge--Kantorovich--Rubinstein (MKR) distance} with cost function $\sigma$ is defined as
\[
d_c(\rho_0,\rho_1):=\inf_{\pi\in \Gamma(\rho_0,\rho_1)}\iint \sigma(|x-y|)\, d\pi(x,y).
\]
\end{definition}

Roughly speaking, the MKR distance $d_\sigma(\rho_0,\rho_1)$ measures the minimal total cost that is required for transferring mass from a source (producer) $\rho_0$ to a sink (consumer) $\rho_1$. The minimal total cost depends on the cost function $\sigma(z)$ which encodes the price one has to pay when transferring a unit of mass over a distance
 $z$. The measures $\pi\in\Gamma(\rho_0,\rho_1)$ are the transport plans, i.e., $\pi(x,y)$ is the amount of mass that is transported from $x$ to $y$.

\medskip

With the help of the MKR distance, we introduce the following mixing measure:

\begin{definition}[MKR mixing measure]\label{D3}
Let $\rho:[0,1)^d\to \{\pm1\}$ be a periodic function with
\[
\int \rho\, dx=0.
\]
Let $\rho_+$ and $\rho_-$ be the probability densities defined by
\[
\rho_+ =2\max\{\rho,0\}\quad\mbox{and}\quad\rho_-=-2\min\{\rho,0\},
\]
 so that $\rho=\frac12(\rho_+-\rho_-)$. We then define the {\em MKR mixing measure} as
\[
D(\rho):= \exp\left( d_{\ln}(\rho_+,\rho_-)\right).
\]
\end{definition}

Notice that even though the above definition can be easily extended to the vector space of mean-zero functions on $[0,1)^d$, the quantity $D(\rho)$ is not a norm as it lacks the triangle inequality.

\medskip

We remark that the above MKR mixing measure has units of length and describes thus, at least heuristically, the typical size of the unmixed regions in the mixture. 
Moreover, by the virtue of being defined via an MKR distance, $D(\rho)$ measures the total cost required for transporting $\rho_+$ into $\rho_-$, which is decreasing when the typical size of the unmixed regions gets smaller, that is, when the fluid is better mixed.

\medskip

Similar logarithmic MKR distances were used in \cite{BOS,OSS} in the study of demixing problems, where the flow field $u$ is slaved to the indicator function $\rho$ via a Stokes equation, modelling spinodal decomposition in binary viscous fluids. In a certain sense, our first result comes as a corollary of a dissipation inequality derived in \cite[Proposition 2.2]{BOS}.

\begin{theorem}[\cite{BOS}]\label{T1}
Let $1<p\le\infty$. There exists a constant $c>0$ depending on $p$ and $d$ only such that for every $T>0$
\[
D(\rho(T))\;\ge\; D(\rho_0) \exp\left(-c\int_0^T \|\grad u\|_{L^p}\, dt\right).
\]
\end{theorem}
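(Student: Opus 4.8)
The strategy is to establish a differential inequality for the map $t \mapsto d_{\ln}(\rho_+(t), \rho_-(t))$ along the flow and then integrate. Writing $d(t) := d_{\ln}(\rho_+(t),\rho_-(t))$, the goal is to show
\[
\frac{d}{dt} d(t) \;\ge\; -c\,\|\grad u(t)\|_{L^p},
\]
which upon integration from $0$ to $T$ gives $d(T) \ge d(0) - c\int_0^T \|\grad u\|_{L^p}\,dt$, and exponentiating yields the claim since $D(\rho) = \exp(d_{\ln}(\rho_+,\rho_-))$.

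First I would fix a (near-)optimal transport plan $\pi_t \in \Gamma(\rho_+(t), \rho_-(t))$ for $d(t)$, and use the flow map $\Phi_t$ generated by $u$ to transport it: since $\rho_\pm$ satisfy the same transport equation \eqref{8} (being $2\max\{\rho,0\}$ and $-2\min\{\rho,0\}$, which are transported as $\rho$ is), the pushforward $(\Phi_{t+h}\circ\Phi_t^{-1} \times \Phi_{t+h}\circ\Phi_t^{-1})_\# \pi_t$ is an admissible competitor in $\Gamma(\rho_+(t+h), \rho_-(t+h))$. Estimating the change in cost then reduces to controlling how the flow distorts the pairwise logarithmic distances $\ln|x-y|$ appearing in the cost integral. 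The key analytic input is a bound on the rate of change of $\ln|\Phi_t(x) - \Phi_t(y)|$, which is governed by $|u(t,\Phi_t(x)) - u(t,\Phi_t(y))|/|\Phi_t(x)-\Phi_t(y)|$ — a difference quotient of $u$. This is exactly where the logarithmic cost function is tailored to the problem: the singular kernel $1/|x-y|$ is precisely what makes a difference quotient of $u$ integrable against an $L^p$ bound on $\grad u$ (for $p>1$), via the observation that $|u(x)-u(y)|/|x-y| \lesssim M|\grad u|(x) + M|\grad u|(y)$ where $M$ denotes a (local, weighted) maximal function — this is the tool the paper promises to discuss in the appendix. The $L^p \to L^p$ boundedness of the maximal function for $p>1$ then converts this into the bound $c\|\grad u\|_{L^p}$, with the constant blowing up as $p \to 1$.

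Concretely, the main steps are: (i) reduce the problem to the differential inequality via the semigroup/flow-map structure and the fact that $\rho_\pm$ are passively transported; (ii) differentiate the transported cost, obtaining $\frac{d}{dt}\ln|\Phi_t(x)-\Phi_t(y)| = \frac{(\Phi_t(x)-\Phi_t(y))\cdot(u(t,\Phi_t(x))-u(t,\Phi_t(y)))}{|\Phi_t(x)-\Phi_t(y)|^2}$, bounded in absolute value by $|u(t,\Phi_t(x))-u(t,\Phi_t(y))|/|\Phi_t(x)-\Phi_t(y)|$; (iii) apply the maximal function estimate to bound this difference quotient pointwise by $M|\grad u|$ evaluated at the two points; (iv) integrate against the transport plan $\pi_t$, whose marginals are the probability densities $\rho_\pm(t)$ — these are bounded ($\rho_\pm \le 2$), so the integral of $M|\grad u|$ against them is controlled by $2\|M|\grad u|\|_{L^1(\text{loc})}$, hence by $\|M|\grad u|\|_{L^p} \lesssim_p \|\grad u\|_{L^p}$ on the bounded period cell; (v) assemble the differential inequality and integrate.

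The main obstacle is step (iii)–(iv): making the maximal-function bound for difference quotients rigorous in the periodic setting and with the correct (weighted, local) maximal operator, and checking that the relevant constant depends only on $p$ and $d$. One subtlety is that $\Phi_t$ need not be measure-preserving in a way that trivially transfers $L^p$ bounds — but since $u$ is divergence-free, the flow map $\Phi_t$ \emph{is} volume-preserving, so composing with $\Phi_t$ costs nothing in any $L^p$ norm, which resolves this. A second subtlety is the non-uniqueness and regularity of optimal plans; this can be sidestepped by working with near-optimal plans and passing to the limit, or by using the Kantorovich duality formulation and estimating along a near-optimal Kantorovich potential instead. Since the paper attributes the core dissipation inequality to \cite[Proposition 2.2]{BOS}, I expect the actual proof to invoke that result more or less directly, with the present contribution being the observation that the demixing dissipation bound translates verbatim into this mixing lower bound.
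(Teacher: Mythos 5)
Your proposal follows essentially the same route as the paper: a differential inequality for $t\mapsto d_{\ln}(\rho_+(t),\rho_-(t))$ obtained by transporting (near-)optimal plans along the flow, the Crippa--De Lellis bound $|u(x)-u(y)|/|x-y|\lesssim (M|\grad u|)(x)+(M|\grad u|)(y)$, the $L^p$-boundedness of the maximal function combined with the uniform bound $\|\rho_\pm\|_{L^\infty}=2$ on the marginals, and integration in time. The only technical devices you omit are the regularization $\ln_{\eps}(z)=\ln(z+\eps)-\ln\eps$ of the cost function (needed because $\ln$ is unbounded below near $z=0$, so $d_{\ln}$ could a priori be $-\infty$, and removed at the end by monotonicity and $\eps\downarrow0$), and the observation that for the required lower bound on $d(t+h)-d(t)$ one should pull an admissible plan at time $t+h$ back to time $t$ rather than push the time-$t$ plan forward; both are routine.
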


The statement in Theorem \ref{T1} shows that the MKR mixing measure cannot decay faster than exponentially in time, and thus, perfect mixing in finite time is impossible for {\em any} given velocity vector field $u$ for which
\[
\int_0^T \|\grad u\|_{L^p}\, dt<\infty\quad\mbox{for all }T>0.
\]

\begin{remark}[Possible generalizations]\label{R1}
We remark there is room for several generalizations of the above result:

\begin{enumerate}
\item An analogous bound can be derived in the case where $m:=\int \rho\, dx\in(-1,1)\setminus\{0\}$. In this situation, the definition of $D(\rho)$ has to be modified: $\rho_{\pm}$ has to be replaced by $(\rho-m)_{\pm}$, where $(.)_{\pm}$ is defined as in Definition \ref{D3}.

\item As the system \eqref{8}\&\eqref{9} is invariant under the rescaling $x=L^{-1}\hat x$, $t=L^{-1}\hat t$, $\rho=\hat \rho$, and $u=\hat u$, our results extend to any domain $[0,L)^d$ with $L>0$. In this case, one has to understand the involved spatial integrals as spatial {\em averages}.

\item The assumption that $\rho\in\{\pm1\}$ is rather for notational convenience. Qualitatively similar results are true for any {\em smooth density} $\rho\in\R$ with $\|\rho\|_{L^{p'}}\le 1$ where $1/p+1/p'=1$.
\end{enumerate}
In any of these case, the constants appearing in the statement of Theorem \ref{T1} remain generic, i.e., they depend on $d$ and $p$ only.
\end{remark}

Notice that the statement in Theorem \ref{T1} is closely related to a mixing conjecture of Bressan \cite{Bressan03}. A version of this conjecture has been already established in \cite{CrippaDeLellis} and our approach uses techniques developed therein. The connection between Bressan's conjecture and $\dot H^{-1}$ decay rates is discussed in \cite[Sec.\ 4.A]{Lunasin12}.

\medskip

Although the MKR mixing measure is not known to be equivalent to a (negative) Sobolev norm, the following estimates, derived in Lemmas \ref{L1} and \ref{L2} below, indicate that MKR distances yield indeed reasonable measures of the degree of mixedness:
\begin{equation}\label{9aa}
\frac{c}{[\rho]_{BV}}\;\le\; D(\rho)\; \le\; \|\rho\|_{\dot H^{-1}}.
\end{equation}
Here, $c>0$ is a uniform constant that depends only on the space dimension and $[\rho]_{BV}$ is the $BV$ norm of the indicator function $\rho$. If the boundary of the level sets of $\rho$ are sufficiently regular, we simply have $[\rho]_{BV} = 2\Ha^{d-1}(\{\rho=1\})$.

\medskip

As a corollary of Theorem \ref{T1} and \eqref{9aa}, we obtain the following estimate on the decay rate of the $\dot H^{-1}$ norm.

\begin{theorem}\label{T2}
Let $1<p\le\infty$. There exists constants $c,C>0$ depending on $p$ and $d$ only such that for every $T>0$
\[
[ \rho_0]_{BV} \|\rho(T,\tacka)\|_{\dot H^{-1}}\;\ge\; C \exp\left(-c\int_0^T \|\grad u\|_{L^p}\, dt\right).
\]
\end{theorem}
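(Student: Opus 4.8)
\textbf{Proof proposal for Theorem~\ref{T2}.}
The plan is to obtain Theorem~\ref{T2} as an immediate consequence of Theorem~\ref{T1} together with the comparison estimates \eqref{9aa}. First I observe that the statement makes sense: since $\rho$ is transported by the smooth incompressible flow \eqref{8}\&\eqref{9}, the solution $\rho(T,\tacka)$ is again a periodic function taking only the values $\pm1$ with vanishing spatial mean (the flow merely rearranges the two phases), so that $D(\rho(T))$ is well-defined and Theorem~\ref{T1} applies to it. Likewise $[\rho_0]_{BV}$ is a fixed finite quantity determined by the initial datum, which is exactly the quantity appearing in the asserted bound.

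Next I would chain the inequalities. Applying the upper bound in \eqref{9aa} to $\rho(T,\tacka)$ gives $D(\rho(T))\le\|\rho(T,\tacka)\|_{\dot H^{-1}}$, while applying the lower bound in \eqref{9aa} to the \emph{initial} datum gives $D(\rho_0)\ge c_0/[\rho_0]_{BV}$ for a dimensional constant $c_0>0$. Combining these with Theorem~\ref{T1} yields
\[
\|\rho(T,\tacka)\|_{\dot H^{-1}}\;\ge\;D(\rho(T))\;\ge\;D(\rho_0)\exp\!\left(-c\int_0^T\|\grad u\|_{L^p}\,dt\right)\;\ge\;\frac{c_0}{[\rho_0]_{BV}}\exp\!\left(-c\int_0^T\|\grad u\|_{L^p}\,dt\right),
\]
and multiplying through by $[\rho_0]_{BV}>0$ and setting $C=c_0$ gives exactly the claim, with all constants inherited from Theorem~\ref{T1} and \eqref{9aa} and hence depending only on $p$ and $d$. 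The point to stress here is the asymmetry in how \eqref{9aa} is used: the $BV$ seminorm is invoked only at $t=0$ (via the lower bound $c_0/[\rho_0]_{BV}\le D(\rho_0)$), which is essential because $[\rho(t)]_{BV}$ may grow under the flow and is not controlled a priori, whereas the $\dot H^{-1}$ comparison is used at time $T$ where it only needs to be an upper bound.

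Since this deduction is a one-line affair, the substance — and what I expect to be the main obstacle — is establishing the chain \eqref{9aa} itself, i.e.\ Lemmas~\ref{L1} and~\ref{L2}. For the lower bound $c/[\rho]_{BV}\le D(\rho)$ I would argue that rearranging $\rho_+$ into $\rho_-$ forces mass across the interface $\partial\{\rho=1\}$, so by Kantorovich--Rubinstein duality the $1$-Wasserstein distance $W_1(\rho_+,\rho_-)$ is bounded below by a reciprocal-perimeter (isoperimetric-type) quantity $\gtrsim 1/[\rho]_{BV}$, and since $d_{\ln}(\rho_+,\rho_-)\le\ln W_1(\rho_+,\rho_-)$ by Jensen and concavity of $\ln$ this is not quite what is needed — so one instead must directly estimate the logarithmic transport cost from below, exploiting that the diagonal contribution $\ln|x-y|\to-\infty$ is offset by the bounded, mean-zero structure of $\rho_\pm$. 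For the upper bound $D(\rho)\le\|\rho\|_{\dot H^{-1}}$ I would pass through $W_2$: a Peyre-type inequality gives $W_2(\rho_+,\rho_-)\lesssim\|\rho_+-\rho_-\|_{\dot H^{-1}}$ using that $\rho_\pm$ are bounded, and then $d_{\ln}(\rho_+,\rho_-)\le\ln W_1(\rho_+,\rho_-)\le\ln W_2(\rho_+,\rho_-)$ by concavity of the logarithm, whence exponentiating yields the bound (up to tracking the dimensional constant carefully on the unit torus). The delicate issues are exactly this careful bookkeeping of constants on the periodic cell and the degenerate behaviour of the $\ln$-cost near the diagonal.
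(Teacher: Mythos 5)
Your top-level reduction is exactly the paper's: Theorem~\ref{T2} is deduced by applying the upper bound in \eqref{9aa} (Lemma~\ref{L1}) to $\rho(T,\tacka)$, the lower bound (Lemma~\ref{L2}) to $\rho_0$, and chaining with Theorem~\ref{T1}; your remark that the $BV$ seminorm is only ever evaluated at $t=0$, precisely because $[\rho(t)]_{BV}$ is not controlled under the flow, is the right observation. That part is correct and needs no change.

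The gaps are in your sketches of the two lemmas, which you yourself identify as the substance. For the upper bound $D(\rho)\le\|\rho\|_{\dot H^{-1}}$ your route through $W_2$ and a Peyre-type inequality does not work: the inequality $W_2(\mu,\nu)\lesssim\|\mu-\nu\|_{\dot H^{-1}}$ requires the densities to be bounded \emph{below} (boundedness above gives only the reverse inequality $\|\mu-\nu\|_{\dot H^{-1}}\lesssim W_2$), and here $\rho_+$ and $\rho_-$ each vanish on half the torus, so the hypothesis fails. The detour is also unnecessary: Jensen applied to any plan gives $D(\rho)\le d_{id}(\rho_+,\rho_-)=W_1(\rho_+,\rho_-)$, and Kantorovich--Rubinstein duality together with $\|\grad\zeta\|_{L^2}\le\|\grad\zeta\|_{L^\infty}$ (so the Lipschitz constraint set is contained in the $\dot H^1$ constraint set) gives $W_1\le\|\rho\|_{\dot H^{-1}}$ directly; this is the paper's one-line argument. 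For the lower bound $c\le D(\rho)[\rho]_{BV}$ you correctly diagnose that Jensen runs the wrong way (it bounds $d_{\ln}$ from above by $\ln W_1$, which is useless for a lower bound), but you then stop at ``one must directly estimate the logarithmic transport cost from below'' without supplying an argument. This is where all the work lies: the paper's proof writes $1=\int\rho^2\,dx=\int\rho(\rho-\rho_R)\,dx+\int\rho\rho_R\,dx$ for a mollification $\rho_R$, bounds the first term by $2R[\rho]_{BV}$, represents the second via an arbitrary plan $\pi\in\Gamma(\rho_+,\rho_-)$ split into $|x-y|\ge r$ and $|x-y|<r$, and controls the near-diagonal contribution $\iint_{|x-y|<r}\ln\frac{r}{e|x-y|}\,d\pi\lesssim r[\rho]_{BV}$ by means of \emph{weighted local maximal functions} (Lemmas~\ref{L3} and~\ref{L4}), before optimizing in $r$ and $R$. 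None of this is recoverable from the structural remark about the diagonal being ``offset by the bounded, mean-zero structure of $\rho_\pm$,'' so as it stands the proposal does not establish \eqref{9aa} and hence does not prove the theorem.
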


This second statement excludes the possibility of finite time mixing in terms of the more classical $\dot H^{-1}$ mixing norm. The lower bound is in agreement with the robust numerical results obtained in \cite{LinThiffeaultDoering11}, where exponential decay rates were observed for the $\dot H^{-1}$ mixing norm when applying optimal stirring strategies. Unfortunately, our approach does not allow for a measurement of the degree of mixedness of the initial configuration in terms of this $\dot H^{-1}$ norm, but only in terms of the inverse $BV$ norm. The latter norm is weaker in the sense that $[\rho]_{BV}^{-1}\lesssim \|\rho\|_{\dot H^{-1}}$, see \eqref{9aa}. Notice that (inverse) $BV$ norms are suitable measures of demixing problems \cite{ContiNiethammerOtto06,BOS} rather than mixing problems, as small fluctuations along the boundary of $\{\rho=1\}$ increase the perimeter while the typical length scale of the unmixed regions (and thus the $\dot H^{-1}$ norm) remains unchanged. Hence, smallness of $[\rho]_{BV}^{-1}$ does not imply that the typical length scale is small. For sufficiently regular initial configuration (e.g.\ stripe or checkerboard pattern), however, both quantities are morally equivalent. The interplay of $BV$ and $\dot H^{-1}$ norms has been the subject of intensive research in recent years, see e.g.\ \cite{AlbertiChoksiOtto09}. 

\begin{remark}[Possible generalizations]
Again, we discuss possible generalizations of the statement of Theorem \ref{T2}.
\begin{enumerate}
\item With the same modifications as in Remark \ref{R1}, the analogous result applies when considering arbitrary systems $[0,L)^d$ with $L>0$.
\item In the case of off-critical mixtures, e.g., $1>m\gg0$, an analogous result is not obvious. Our proof of estimating the MKR mixing norm by the $\dot H^{-1}$ norm is quite crude as it relies essentially on Jensen's inequality. In the off-critical case, Jensen's inequality produces new $m$-dependent constants in both the rate $c$ and the factor $C$, since $(\rho-m)_{\pm}$ are not probability distributions. We expect, however, that a more careful analysis would still provide uniform decay rates, in analogy to the formally equivalent statement for the MKR mixing measures $D(\rho)$.  The constant $C$, however, will certainly show an $m$-dependence as a consequence of the sharp estimate \cite[Lemma 1.2]{ContiNiethammerOtto06}.
\item The result in Theorem \ref{T2} can be extended to smooth functions $\rho\in\R$ in several ways. One possibility is by replacing the $BV$ norm by the Ginzburg--Landau energy
\[
E(\rho)=\int \frac12|\grad \rho|^2 +\frac12(1-\rho^2)^2\, dx,
\]
and restricting to the low-energy regime $E(\rho)\le1$ (which includes perfect mixing $\rho\equiv0$), see also \cite[Prop.\ 4.4]{OSS}. In a certain rigorous sense ($\Gamma$-convergence), the Ginzburg--Landau energy is a smooth-interface regularization of the $BV$ norm \cite{ModicaMortola}. 
\end{enumerate}
\end{remark}

In the remainder of this paper, we prove Theorems \ref{T1} and \ref{T2}.

\section{Proofs}\label{S3}

\subsection{Proof of Theorem \ref{T1}}\label{S3.1}

Though the proof of Theorem \ref{T1} is a small variant of the proof of Proposition 2.2 in \cite{BOS}, we will provide most of its details for the sake of completeness and for the reader's convenience.

\begin{proof}[Proof of Theorem \ref{T1}] Since $u$ is divergence-free and $\rho_{\pm}=\pm(\rho\pm1)$, both $\rho_+$ and $\rho_-$ satisfy a continuity equation
\[
\partial_t \rho_{\pm} + \div(u\rho_{\pm})\;=\;0.
\]
Arguing as in \cite[Lemma 4.1]{BOS}, we can thus show that $t\mapsto d_{\ln_{\eps}}( \rho_+(t),\rho_-(t))$ is absolutely continuous and 
\[
\frac{d}{d t} d_{\ln_{\eps}}( \rho_+,\rho_-)\;\gtrsim\;- \iint \frac{| u(x)- u(y)|}{|x-y|}\, d\pi^{\eps}_*( t)(x,y),
\]
where $\ln_{\eps}(z)=\ln(z+\eps)-\ln\eps$ for some small $\eps>0$ and $\pi_*^{\eps}( t)$ denotes an optimal transport plan for the MKR distance $d_{\ln_{\eps}}( \rho_+( t), \rho_-( t))$. Here we have introduced a regularization of the cost function in order to ensure finiteness for all $z\ge0$. The idea how to estimate the above Lipschitz norm of $u$ by the $L^p$ norm of its gradient is inspired by the work of Crippa \& DeLellis \cite{CrippaDeLellis} and uses the notion of maximal functions. If $M:= M^0_1$ denotes the maximal function operator, see Definition \ref{D1}, it holds
\begin{equation}\label{9ab}
\frac{|f(x) - f(y)|}{|x-y|}\;\lesssim\; (M|\grad f|)(x) + (M|\grad f|)(y),
\end{equation}
and, for $p\in(1,\infty]$ 
\begin{equation}\label{9ac}
\| M|f|\|_{L^p}\lesssim\; \|f\|_{L^p}.
\end{equation}
The first estimate is proved in a slightly more general version in Lemma \ref{L1} of the appendix. The second estimate is the fundamental inequality for maximal functions, cf.\ \cite[p.\ 5, Theorem 1]{Stein70}. Applying \eqref{9ab}, we infer
\[
\frac{d}{d t} d_{\ln_{\eps}}( \rho_+,\rho_-)\;\gtrsim\;- \iint (M|\grad  u|)(x) + (M|\grad  u|)(y)\, d\pi_*^{\eps}( t)(x,y),
\]
which can be rewritten as
\[
\frac{d}{d t} d_{\ln_{\eps}}( \rho_+,\rho_-)\;\gtrsim\; -\int (M|\grad  u|) \rho_+\, dx  + \int (M|\grad  u|)\rho_-\, dy
\]
by the definition of the transport plans. Now, using $\|\rho_{\pm}\|_{L^{\infty}}=2$, Jensen's inequality and \eqref{9ac}, we infer that
\[
\frac{d}{d t} d_{\ln_{\eps}}( \rho_+,\rho_-)\;\gtrsim\;- \| \grad u \|_{L^p}.
\]
Integrating in time now yields
\[
d_{\ln_{\eps}}( \rho_+(T),\rho_-(T)) \;\ge\; d_{\ln_{\eps}}( (\rho_0)_+,(\rho_0)_-)  - c\int_0^T \|\grad u\|_{L^p}\, dt,
\]
for some constant $c>0$ depending only on $d$ and $p$. Since transport plans are probability measures, the above estimate implies
\[
\iint \ln(|x-y|+\eps)\, d\pi(T)(x,y) \;\ge\; \iint \ln(|x-y|+\eps)\, d\pi_*^{\eps}(0)(x,y) - c\int_0^T \|\grad u\|_{L^p}\, dt,
\]
for all $\pi(T)\in\Gamma(\rho_+(T),\rho_-(T))$. By the monotonicity of the logarithm, it is
$\iint \ln(|x-y|+\eps)\, d\pi_*^{\eps}(0)(x,y)\ge d_{\ln}((\rho_0)_+, (\rho_0)_-)$, and thus, passing to the limit $\eps\downarrow 0$ we have
\[
\iint \ln|x-y|\, d\pi(T)(x,y)\;\ge\; d_{\ln}((\rho_0)_+, (\rho_0)_-) - c\int_0^T \|\grad u\|_{L^p}\, dt,
\]
for all $\pi(T)\in\Gamma(\rho_+(T),\rho_-(T))$. Minimizing over $\pi(T)$ and applying the exponential function yields the statement of the theorem.
\end{proof}

\subsection{Proof of Theorem \ref{T2}}\label{S3.2}

The statement in Theorem \ref{T2} is an immediate consequence of Theorem \ref{T1} and the following two lemmas:

\begin{lemma}\label{L1}
For every periodic function $\rho: [0,1)^d\to \R$ with $\int \rho\, dx=0$, it holds that
\[
D(\rho)\;\le\; \|\rho\|_{\dot H^{-1}}.
\]
\end{lemma}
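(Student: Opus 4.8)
The plan is to bound the MKR mixing measure $D(\rho) = \exp\bigl(d_{\ln}(\rho_+,\rho_-)\bigr)$ by showing that $d_{\ln}(\rho_+,\rho_-) \le \ln\|\rho\|_{\dot H^{-1}}$, which is equivalent to exhibiting a single admissible transport plan $\pi \in \Gamma(\rho_+,\rho_-)$ with $\iint \ln|x-y|\,d\pi(x,y) \le \ln\|\rho\|_{\dot H^{-1}}$. The natural candidate comes from the optimal transport for the quadratic (or linear) cost, but more directly: recall that $\|\rho\|_{\dot H^{-1}}$ has the variational characterization $\|\rho\|_{\dot H^{-1}}^2 = \inf\{\int|j|^2\,dx : \div j = \rho\}$, and, closer to the transport picture, the Monge--Kantorovich distance with cost $|x-y|$ (the Wasserstein-1 distance $W_1$) satisfies $W_1(\rho_+,\rho_-) \lesssim \|\rho_+ - \rho_-\|_{\dot W^{-1,1}} \lesssim \|\rho\|_{\dot H^{-1}}$ on the torus, using $\|\rho_+-\rho_-\|_{L^1} \le \|\rho_+-\rho_-\|_{L^2}$ type embeddings together with the mean-zero condition. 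So the first step is to fix an optimal plan $\pi$ for the linear cost $|x-y|$, i.e. the one realizing $W_1(\rho_+,\rho_-)$.

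Given that plan, the key step is Jensen's inequality: since $\ln$ is concave and $\pi$ is a probability measure,
\[
d_{\ln}(\rho_+,\rho_-) \;\le\; \iint \ln|x-y|\, d\pi(x,y) \;\le\; \ln\!\left(\iint |x-y|\, d\pi(x,y)\right) \;=\; \ln W_1(\rho_+,\rho_-).
\]
Then exponentiating gives $D(\rho) \le W_1(\rho_+,\rho_-)$, and it remains to bound $W_1(\rho_+,\rho_-)$ by $\|\rho\|_{\dot H^{-1}}$. For the latter I would use the Kantorovich--Rubinstein duality $W_1(\rho_+,\rho_-) = \sup\{\int \zeta\,(\rho_+-\rho_-)\,dx : \|\grad\zeta\|_{L^\infty}\le 1\}$ together with $\rho_+ - \rho_- = 2\rho$ and the bound $\int \zeta\cdot 2\rho\,dx \le 2\|\zeta\|_{\dot H^1}\|\rho\|_{\dot H^{-1}}$, combined with the Poincaré-type inequality $\|\grad\zeta\|_{L^2} \le \|\grad\zeta\|_{L^\infty}$ on the unit-volume period cell (Cauchy--Schwarz), to get $W_1(\rho_+,\rho_-) \le 2\|\rho\|_{\dot H^{-1}}$. (Tracking the factor $2$ carefully against the normalization $\rho = \frac12(\rho_+-\rho_-)$ and the definition of $\dot H^{-1}$ should make the constant exactly $1$; if not, the inequality holds up to an absolute constant, which can be absorbed since the statement as written has constant $1$ — so the constants must be chased honestly.)

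The main obstacle I anticipate is purely one of constants and normalizations: making sure that the chain $D(\rho) \le W_1(\rho_+,\rho_-) \le c\,\|\rho\|_{\dot H^{-1}}$ actually closes with $c = 1$ rather than some larger absolute constant. This requires being precise about three conventions simultaneously — the factor $2$ in $\rho_\pm = \pm(\rho \pm 1)$ versus $\rho = \frac12(\rho_+ - \rho_-)$ (note $\rho_+ - \rho_- = 2\rho$ but also $\rho_+,\rho_-$ are genuine probability densities because the mixture is critical), the duality pairing normalization for $\dot H^{-1}$, and the Poincaré constant on $[0,1)^d$ (which is $1$ with the stated volume normalization when passing from $L^\infty$ to $L^2$ control of the gradient). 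A secondary subtlety is the reduction to $W_1$: one must justify that the $W_1$-optimal plan is admissible in Definition \ref{D2} and that Jensen applies with the genuine (finite) value of $d_{\ln}$; since $\ln|x-y|$ can be $-\infty$ on the diagonal this only helps the inequality direction we want, so no regularization (unlike in the proof of Theorem \ref{T1}) is needed here. I would also remark that the same argument shows $D(\rho) \le W_1(\rho_+,\rho_-)$, which is of independent interest.
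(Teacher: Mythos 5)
Your proof follows essentially the same route as the paper's: Jensen's inequality (using that transport plans are probability measures) turns the logarithmic cost into the linear cost $d_{id}(\rho_+,\rho_-)=W_1(\rho_+,\rho_-)$, and Kantorovich--Rubinstein duality combined with $\|\grad\zeta\|_{L^2}\le\|\grad\zeta\|_{L^\infty}$ on the unit-volume cell bounds this by $\|\rho\|_{\dot H^{-1}}$. Your concern about the factor $2$ coming from $\rho_+-\rho_-=2\rho$ is legitimate --- the paper's own statement of the duality formula silently drops it --- but this is harmless, since any absolute constant is absorbed into the constants of Theorem \ref{T2}.
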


\begin{lemma}\label{L2} There exists a constant $c>0$ depending only on the space dimension, such that for every $\rho: [0,1)^d\to \{\pm 1\}$ with $\int \rho\, dx=0$ it holds
\[
c\;\le\;  D(\rho)[\rho]_{BV}.
\]
\end{lemma}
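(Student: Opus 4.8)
The goal is a uniform lower bound $c \le D(\rho)[\rho]_{BV}$, equivalently $d_{\ln}(\rho_+,\rho_-) \ge \ln c - \ln[\rho]_{BV} = -\ln([\rho]_{BV}/c)$, i.e. an upper bound on $[\rho]_{BV}$ in terms of the MKR distance would be awkward since $d_{\ln}$ can be very negative; the right reading is that we need a \emph{lower} bound $D(\rho) \ge c/[\rho]_{BV}$, so $d_{\ln}(\rho_+,\rho_-) \ge \ln c - \ln[\rho]_{BV}$. The plan is to exhibit, for any competitor transport plan $\pi \in \Gamma(\rho_+,\rho_-)$, a lower bound $\iint \ln|x-y|\,d\pi \ge -\ln([\rho]_{BV}) - C$. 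The key point is that $\rho_+$ and $\rho_-$ are bounded densities ($\|\rho_\pm\|_{L^\infty}=2$) supported on sets of measure $1/2$ each, and the perimeter $[\rho]_{BV}$ controls how ``thin'' these sets can be: a set of measure $1/2$ with small perimeter cannot be concentrated near a single point, so mass cannot be transported over arbitrarily short distances in bulk.

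First I would reduce to estimating, for a fixed plan $\pi$, the quantity $\iint \ln\frac{1}{|x-y|}\,d\pi(x,y)$ from above. Split the integration domain at scale $r>0$: on $\{|x-y| \ge r\}$ the integrand $\ln(1/|x-y|)$ is at most $\ln(1/r)$ (and in fact at most $\ln(\sqrt d)$ when $|x-y|$ exceeds the diameter, but let us be crude), contributing at most $\max\{0,\ln(1/r)\}$. On $\{|x-y| < r\}$ one must control $\iint_{|x-y|<r} \ln\frac{1}{|x-y|}\,d\pi$. Here I would use that the $x$-marginal of $\pi$ is $\rho_+\,dx$ with $\|\rho_+\|_\infty = 2$: for each fixed $x$, the $\pi$-mass of the ball $B_r(x)$ in the $y$-variable is at most the $\rho_-$-mass of $B_r(x)$, which is at most $2|B_r(x)| = C r^d$. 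Layer-cake / Fubini then bounds $\iint_{|x-y|<r}\ln\frac{1}{|x-y|}\,d\pi \le \int \rho_+(x)\left(\int_0^r \frac{1}{s}\,\pi_x(B_s(x))\,\frac{ds}{s}\cdots\right)$ — more cleanly, write $\ln\frac1{|x-y|} = \int_{|x-y|}^1 \frac{ds}{s}$ for $|x-y|<1$ and use Fubini to get $\int_0^r \frac{\pi(\{|x-y|<s\})}{s}\,ds \le \int_0^r \frac{C r^d \wedge (\text{total mass } 1)}{s}\,ds$; but this naive bound diverges, so the perimeter must enter to improve the estimate $\pi_x(B_s) \le 2|B_s|$ on average.

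The improvement is the isoperimetric/relative-isoperimetric input: since $\{\rho=1\}$ has measure $1/2$ in the torus and perimeter $[\rho]_{BV}/2$, the set cannot have too much of its mass within distance $s$ of its own boundary region — but actually the cleanest route is different: one knows $\rho = \frac12(\rho_+ - \rho_-)$ so $\|\rho\|_{\dot W^{-1,1}} \sim d_1(\rho_+,\rho_-)$ (the $W_1$ Wasserstein distance with linear cost), and by a standard duality $\|\rho\|_{\dot W^{-1,1}} \gtrsim \|\rho\|_{L^1}^2 / [\rho]_{BV} = (\text{const})/[\rho]_{BV}$ (this is exactly the Conti–Niethammer–Otto type estimate referenced after \eqref{9aa} as \cite[Lemma 1.2]{ContiNiethammerOtto06}, relating $\dot W^{-1,1}$ below, $L^1$, and $BV$ above). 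Then Jensen's inequality applied to the concave function $\ln$ against the optimal \emph{linear}-cost plan $\pi$ gives $d_{\ln}(\rho_+,\rho_-) = \inf_\pi \iint \ln|x-y|\,d\pi \ge \ln\!\big(\inf_\pi \iint |x-y|\,d\pi\big) \cdot(\text{no—wrong direction})$; rather, evaluating $d_{\ln}$ at the linear-optimal plan and using Jensen downward: $\iint \ln|x-y|\,d\pi_{\text{lin}} \ge \ln\!\big(\iint |x-y|\,d\pi_{\text{lin}}\big) = \ln d_1(\rho_+,\rho_-)$ is false too since $d_{\ln}$ minimizes over a \emph{different} plan. The correct statement: for the MKR-$\ln$ optimal plan $\pi_*$, Jensen gives $d_{\ln} = \iint \ln|x-y|\,d\pi_* \ge \ln\!\big(\iint|x-y|\,d\pi_*\big) \ge \ln d_1(\rho_+,\rho_-) \ge \ln(c/[\rho]_{BV})$, using that $\pi_*$ is admissible for the $d_1$ problem too. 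Exponentiating yields $D(\rho) = e^{d_{\ln}} \ge c/[\rho]_{BV}$, which is the claim.

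\begin{proof}[Proof sketch plan]
Let $\pi_*$ be an optimal plan for $d_{\ln}(\rho_+,\rho_-)$. Since $z\mapsto \ln z$ is concave, Jensen's inequality gives
\[
d_{\ln}(\rho_+,\rho_-) \;=\; \iint \ln|x-y|\,d\pi_*(x,y) \;\ge\; \ln\!\left(\iint |x-y|\,d\pi_*(x,y)\right) \;\ge\; \ln d_1(\rho_+,\rho_-),
\]
where $d_1$ is the MKR distance with linear cost $\sigma(z)=z$ and the last inequality holds because $\pi_*\in\Gamma(\rho_+,\rho_-)$ is admissible in the $d_1$-minimization. By Kantorovich–Rubinstein duality, $d_1(\rho_+,\rho_-) = \sup\{\int \zeta(\rho_+-\rho_-)\,dx : \|\grad\zeta\|_{L^\infty}\le 1\} = 2\|\rho\|_{(\dot W^{1,\infty})^*}$. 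It remains to bound this dual norm below by $c/[\rho]_{BV}$; this is precisely the content of \cite[Lemma 1.2]{ContiNiethammerOtto06}, whose proof tests the dual problem with a suitable truncated potential and uses the relative isoperimetric inequality on the torus together with $\|\rho\|_{L^1}=1$. Hence $D(\rho)=\exp(d_{\ln}(\rho_+,\rho_-)) \ge d_1(\rho_+,\rho_-)/2 \cdot (\ldots) \ge c/[\rho]_{BV}$, up to adjusting $c$. The identification $[\rho]_{BV} = 2\Ha^{d-1}(\del\{\rho=1\})$ for regular level sets is standard.
\end{proof}

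The main obstacle I anticipate is making the chain $D(\rho) \ge d_1(\rho_+,\rho_-) \ge c/[\rho]_{BV}$ fully rigorous: the Jensen step is clean, but one must be careful that $d_1(\rho_+,\rho_-)$ is genuinely bounded below by a \emph{perimeter-controlled} quantity and not merely by $\|\rho\|_{L^1}$-type data — this is exactly where the relative isoperimetric inequality (equivalently \cite[Lemma 1.2]{ContiNiethammerOtto06}) does the real work, and where the mean-zero hypothesis $\int\rho=0$ (so that both $\{\rho=\pm1\}$ have measure $1/2$, preventing degeneracy) is essential. A secondary subtlety is the passage from $d_1$ to $D$ when $d_1 < 1$: since $\ln$ is then negative, one needs $d_1 \ge c/[\rho]_{BV}$ to survive exponentiation, which it does as stated.
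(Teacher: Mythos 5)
Your argument has a fatal flaw at its central step: Jensen's inequality is applied in the wrong direction. For the concave function $\ln$ and the probability measure $\pi_*$, Jensen gives
\[
\iint \ln|x-y|\, d\pi_*(x,y)\;\le\;\ln\left(\iint |x-y|\, d\pi_*(x,y)\right),
\]
i.e., the average of the logarithm is bounded \emph{above} by the logarithm of the average, not below. This is exactly the inequality the paper uses to prove the \emph{upper} bound $D(\rho)\le d_{id}(\rho_+,\rho_-)\le\|\rho\|_{\dot H^{-1}}$ in Lemma \ref{L1}; it cannot be reversed to produce a lower bound on $D(\rho)$. Your chain $D(\rho)\ge d_1(\rho_+,\rho_-)\ge c/[\rho]_{BV}$ therefore collapses at the first link --- and indeed $D(\rho)\gtrsim d_1(\rho_+,\rho_-)$ is false in general: a plan that moves a small fraction $\delta$ of the mass over a distance of order $e^{-1/\delta^2}$ contributes almost nothing to the linear cost but drives $d_{\ln}$ to $-\infty$, so $D(\rho)$ can be made arbitrarily small while $d_1$ stays of order one. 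The entire difficulty of Lemma \ref{L2} is precisely to rule out such concentration of transport at tiny scales, and that is where the $BV$ norm must enter \emph{inside} the logarithmic integral, not merely through a bound on the linear-cost distance.

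The paper's proof does exactly this missing work. It starts from $1=\int\rho^2\,dx=\int\rho(\rho-\rho_R)\,dx+\int\rho\rho_R\,dx$, bounds the first term by $2R[\rho]_{BV}$, rewrites the second via an arbitrary plan $\pi$, and splits at a scale $r$. The crucial near-diagonal contribution $\iint_{|x-y|<r}\ln\frac{r}{e|x-y|}\,d\pi$ --- the term your Jensen step silently discards --- is controlled by $r[\rho]_{BV}$ using that $|\rho(x)-\rho(y)|=2$ on the support of $\pi$ together with the weighted local maximal function estimates of Lemmas \ref{L3} and \ref{L4}; optimizing in $r\sim 1/[\rho]_{BV}$ then yields the claim. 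Your instinct that the relative isoperimetric input of \cite[Lemma 1.2]{ContiNiethammerOtto06} ``does the real work'' is misplaced here: that lemma bounds the linear-cost distance from below, but by Lemma \ref{L1} the linear-cost distance \emph{dominates} $D(\rho)$, so it provides information on the wrong side of the quantity you need to control.
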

The first lemma shows that the MKR mixing measure is dominated by the $\dot H^{-1}$ norm. The second lemma provides a lower bound on the MKR mixing measure in terms of the inverse $BV$ norm and is derived as an interpolation inequality. In a certain sense, the estimate can be seen as a non-standard isoperimetric inequality.

\medskip

We start with the relatively simple 
\begin{proof}[Proof of Lemma \ref{L1}]
Let $\pi\in\Gamma(\rho_+,\rho_-)$ be arbitrary, so that
\[
D(\rho)\;\le\; \exp\left(\iint\ln|x-y|\, d\pi\right).
\]
We use Jensen's inequality together with the fact that $\pi$ is a probability measure to infer
\[
D(\rho)\;\le\; \iint |x-y|\, d\pi,
\]
and thus, minimizing over all $\pi\in \Gamma(\rho_+,\rho_-)$ yields
\[
D(\rho)\;\le\; d_{id}(\rho_+,\rho_-).
\]
By the Kantorovich--Rubinstein theorem (cf.\ \cite[Theorem 1.14]{Villani03}), $d_{id}(\rho_+,\rho_-)$ is a dual norm, namely
\[
d_{id}(\rho_+,\rho_-) = \sup\left\{\int \rho\zeta\, dx:\: \|\grad\zeta\|_{L^{\infty}}\le1\right\}.
\]
Because $\|\grad\zeta\|_{L^2}\le\|\grad\zeta\|_{L^{\infty}}$, we thus deduce that
\[
D(\rho)\;\le\;\sup\left\{\int \rho\zeta\, dx:\: \|\grad\zeta\|_{L^{2}}\le1\right\}\\
\;=\; \|\rho\|_{\dot H^{-1}}.
\]
\end{proof}

The proof of the interpolation inequality of Lemma \ref{L2} is more involved. It comes as a nontrivial extension of the estimate \cite[Prop. 2.3]{BOS}. In the new proof below, we use {\em weighted local maximal functions}, that will be introduced and discussed in the appendix.

\begin{proof}[Proof of Lemma \ref{L2}]
Let $0<r\le 1$ and $R>0$ be two arbitrary constants that will be fixed at the end of the proof. We denote by $\rho_R$ the convolution of $\rho$ with a radially symmetric standard mollifier that is supported on a ball of radius $R$. We then write
\begin{equation}\label{4}
1\;=\;\int \rho^2\, dx \; =\; \int \rho(\rho-\rho_R)\,dx + \int \rho\rho_R\, dx.
\end{equation}
The estimate of the first integral on the right is standard and can easily be derived via approximation by smooth functions, namely
\begin{equation}\label{4a}
\int \rho(\rho-\rho_R)\,dx\;\le\; \|\rho\|_{L^{\infty}}R\int |\grad\rho|\;=\; 2R[\rho]_{BV}.
\end{equation}
For the second term, we choose an arbitrary transport plan $\pi\in\Gamma(\rho_+,\rho_-)$ and write
\begin{eqnarray}
\lefteqn{\int \rho\rho_R\, dx}\nonumber\\
&=& \frac12\int \rho_R(\rho_+ - \rho_-)\, dx\nonumber\\
&=& \frac12\iint (\rho_R(x)-\rho_R(y))\, d\pi\nonumber\\
&=& \frac12\iint_{|x-y|\ge r} (\rho_R(x)-\rho_R(y))\, d\pi + \frac12\iint_{|x-y|< r} (\rho_R(x)-\rho_R(y))\, d\pi.\label{5}
\end{eqnarray}
Since $\pi$ is a probability measure, the second term in \eqref{5} can be controlled as follows:
\begin{equation}\label{4b}
\iint_{|x-y|< r} (\rho_R(x)-\rho_R(y))\, d\pi\;\le\;\|\grad \rho_R\|_{L^{\infty}} r\iint d\pi\;\lesssim\;\|\rho\|_{L^{\infty}} \frac{r}{R}\;=\;\frac{r}{R}.
\end{equation}
It remains to estimate the first term in \eqref{5}. We first notice that $1\le \ln |x-y| +\ln\frac{e}{r}$ for all $|x-y|\ge r$, and thus
\begin{eqnarray}
\lefteqn{\iint_{|x-y|\ge r} (\rho_R(x)-\rho_R(y))\, d\pi}\nonumber\\
 &\le& 2\|\rho_R\|_{L^{\infty}} \iint_{|x-y|\ge r}d\pi\nonumber\\
&\le& 4\iint \left(\ln |x-y| + \ln\frac{e}{r}\right)\, d\pi + 4\iint_{|x-y|<r}\ln\frac{r}{e|x-y|} \, d\pi\nonumber\\
&=& 4\iint \ln |x-y| \, d\pi+ 4\ln\frac{e}{r}+ 4\iint_{|x-y|<r}\ln\frac{r}{e|x-y|}\, d\pi.\label{6}
\end{eqnarray}
We now have
\begin{eqnarray*}
\lefteqn{\iint_{|x-y|<r}\ln\frac{r}{e|x-y|}\, d\pi}\\
&=&\iint_{|x-y|<r}\ln\frac{e^2 r}{|x-y|} \, d\pi - 3\iint_{|x-y|<r}d\pi\\
&\le& \frac12 \iint_{|x-y|<r}\sqrt{|x-y|}\left(\ln\frac{e^2 r}{|x-y|} \right)\frac{|\rho(x)-\rho(y)|}{\sqrt{|x-y|}}\, d\pi,
\end{eqnarray*}
because $\rho(x)=1$ and $\rho(y) =-1$ for all $(x,y)$ in the support of $\pi$. Observe that $s\mapsto\sqrt s \ln\frac{e^2r}{s}$ is an increasing function on $(0,r]$, and thus
\[
\iint_{|x-y|<r}\ln\frac{r}{e|x-y|}\, d\pi\;\le\; \sqrt r \iint_{|x-y|<r}\frac{|\rho(x)-\rho(y)|}{\sqrt{|x-y|}}\, d\pi.
\]
The integral on the right can be estimated with the help of weighted local maximal functions, that will be introduced and discussed in the appendix. For this purpose, let $\{\rho_{\eps}\}_{\eps\downarrow0}$ be a smooth approximation of $\rho$.
Applying Lemma \ref{L3} below with $f=\rho_{\eps}$ and $\theta=\frac12$ yields
\[
\iint_{|x-y|<r}\frac{|\rho_{\eps}(x)-\rho_{\eps}(y)|}{\sqrt{|x-y|}}\, d\pi\;\lesssim\; \iint_{|x-y|<r} \left((M_r^{1/2}|\grad \rho_{\eps}|)(x) + (M_r^{1/2}|\grad \rho_{\eps}|)(y)\right)\, d\pi,
\]
and thus, by the definition of $\pi$ and since $\|\rho_{\pm}\|_{L^{\infty}}=2$
\[
\iint_{|x-y|<r}\frac{|\rho_{\eps}(x)-\rho_{\eps}(y)|}{\sqrt{|x-y|}}\, d\pi\;\lesssim\; \int |M_r^{1/2}|\grad \rho_{\eps}||\, dx.
\]
We now invoke Lemma \ref{L4} with $f=|\grad \rho_{\eps}|$ and $\tau=\frac12$, to the effect of
\[
\iint_{|x-y|<r}\frac{|\rho_{\eps}(x)-\rho_{\eps}(y)|}{\sqrt{|x-y|}}\, d\pi\;\lesssim\;   \sqrt{r} \int |\grad \rho_{\eps}|\,dx .
\]
Passing to the limit $\eps\downarrow0$ via Fatou's lemma, the above calculation shows
\begin{equation}\label{7}
\iint_{|x-y|<r}\ln\frac{r}{e|x-y|}\, d\pi\;\lesssim\; r[\rho]_{BV}.
\end{equation}
Finally, combining \eqref{4}--\eqref{7} yields
\[
1\;\le\; 2R[\rho]_{BV} +C \frac{r}{R} + 2\iint \ln|x-y|\, d\pi +2\ln\frac{e}{r} + Cr[\rho]_{BV},
\]
for some $C>0$. We first choose $R=Cr$ to the effect of
\[
0\;\le\; \iint \ln|x-y|\, d\pi + \ln\frac{e}{r} + \frac32Cr[\rho]_{BV}.
\]
Now, optimizing in $r$ yields $r\sim \frac1{[\rho]_{BV}}$ (observe that $[\rho]_{BV}\gtrsim1$ by the virtue of the isoperimetric inequality) and obtain, upon applying the exponential function to both sides of the inequality
\[
c\;\le\; \exp\left(\iint\ln|x-y|\, d\pi\right) [\rho]_{BV},
\]
for some new constant $c>0$. The statement of the lemma follows when optimizing over $\pi\in\Gamma(\rho_+,\rho_-)$.
\end{proof}

\subsection*{Appendix: Weighted local minimal functions}\label{SA}
The proof of Lemma \ref{L2} above uses the notion of weighted local maximal functions that will be introduced and discussed in the following.

\begin{definition}[Weighted local minimal functions] \label{D1}
Let $f:[0,1)^d\to [0,\infty)$ denote a smooth periodic function, and let $0<r\le 1$ and $0\le \tau\le 1$ be two constants. We define the {\em weighted local maximal function} of $f$ as
\begin{equation}\label{9a}
(M_r^{\tau}f)(x):= \sup_{0<q<r} \frac1{|B_q(0)|}\int_{B_q(0)} |z|^{\tau} f(x+z)\, dx.
\end{equation}
\end{definition}

Observe that we recover classical local maximal functions in the case $\tau=0$, cf.\ \cite[Ch.\ 1]{Stein70}. The following estimate is well-known to hold in the case $\theta=1$ (or $\tau=0$).

\begin{lemma}\label{L3}Let $f:[0,1)^d\to \R$ denote a smooth periodic function, and let $0<r\le 1$ and $0< \theta\le 1$ be two constants. Then there exists a constant $C>0$ depending only on $d$ and $\theta$ such that
\begin{equation}\label{1}
\frac{|f(x)-f(y)|}{|x-y|^{\theta}}\;\le\; C\left( (M_r^{1-\theta}|\grad f|)(x)  + (M^{1-\theta}_r |\grad f|)(y)\right),
\end{equation}
for all $x,y\in[0,1)^d$ such that $|x-y|\le r$.
\end{lemma}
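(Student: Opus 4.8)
The plan is to prove the pointwise inequality \eqref{1} by reducing it to the classical mean-value-type bound along the segment joining $x$ and $y$, and then carefully accounting for the weight $|z|^{1-\theta}$. First I would set $h := |x-y| \le r$ and, using the fundamental theorem of calculus along the segment $[x,y]$, write
\[
|f(x)-f(y)| \;\le\; \int_0^1 |\grad f(x + t(y-x))|\, |x-y|\, dt.
\]
The standard trick (as in Crippa--De Lellis, and in the appendix of Stein for the unweighted case) is to average this estimate over $x$ and $y$ ranging in small balls: more precisely, one shows that for a \emph{fixed} pair $x,y$ one can bound $|f(x)-f(y)|$ by an average of $|\grad f|$ over a ball of radius comparable to $h$ centered near $x$ (and symmetrically near $y$), because the segment $[x,y]$ lies within distance $h$ of $x$. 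This is where the balls $B_q(0)$ with $q \lesssim h \le r$ in Definition \ref{D1} enter.

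The key step is to track the weight. In the unweighted maximal function, dividing by $|B_h(0)| \sim h^d$ produces exactly the factor $h^{-1} \cdot h = 1$ needed to get $|f(x)-f(y)|/|x-y| \lesssim M_r^0|\grad f|$. Here the target is $|f(x)-f(y)|/h^\theta$, so I need an extra factor of $h^{1-\theta}$ to appear, and this is precisely what the weight $|z|^{1-\theta}$ supplies: on a ball of radius $\sim h$ one has $|z|^{1-\theta} \sim h^{1-\theta}$ up to constants when integrating (the lower-order contributions near $z=0$ are harmless since $1-\theta \ge 0$ and the weight is integrable). So the concrete manipulation is: bound $|f(x)-f(y)| \lesssim h \cdot \frac{1}{|B_{Ch}(0)|}\int_{B_{Ch}(0)} |\grad f(x+z)|\,dz$ for the unweighted average, then insert $1 \lesssim (h/|z|)^{1-\theta} \cdot (\text{something})$... more cleanly: since we want $M_r^{1-\theta}$ on the right, I would instead directly estimate $|f(x)-f(y)|$ by a weighted average where the weight only helps, using that for $z$ in the relevant ball $|z| \lesssim h$ we can write $h \lesssim h^\theta |z|^{1-\theta} \cdot |z|^{\theta-1} h^{1-\theta}$ — the honest route is to split the segment integral dyadically in the distance from $x$, on the $k$-th dyadic shell of radius $\sim 2^{-k} h$ the weight $|z|^{1-\theta} \sim (2^{-k}h)^{1-\theta}$ and the volume factor gives $(2^{-k}h)^{-d}$, and summing the geometric series in $k$ (convergent because $1-\theta \le 1$, actually because the relevant exponent is positive) yields the clean bound $h^{-\theta}$ times a constant times $(M_r^{1-\theta}|\grad f|)(x)$.

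The main obstacle I anticipate is making the dyadic/averaging argument rigorous while keeping the supremum structure of $M_r^\tau$ intact: one must verify that each average that appears is genuinely of the form $\frac{1}{|B_q(0)|}\int_{B_q(0)} |z|^{1-\theta}|\grad f(x+z)|\,dz$ for some $q < r$ (so that it is bounded by $(M_r^{1-\theta}|\grad f|)(x)$), rather than some weighted integral over an annulus or over a ball not centered at $x$. The symmetrization between $x$ and $y$ handles the ``not centered at $x$'' issue — points in the segment within distance $h/2$ of $x$ are controlled by the maximal function at $x$, the rest by the maximal function at $y$ — which explains why the final bound is a sum of the two terms. Convergence of the dyadic sum requires $1 - \theta \ge 0$, i.e.\ $\theta \le 1$, and the strict positivity $\theta > 0$ is what prevents the endpoint degeneracy; both hypotheses of the lemma are thus used. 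Once the pointwise estimate is set up this way, everything reduces to the elementary integral $\int_{B_q(0)} |z|^{1-\theta}\,dz \sim q^{d+1-\theta}$ and a geometric series, so no further difficulty remains.
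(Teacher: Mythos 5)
Your overall strategy --- a Morrey-type averaging argument plus the fundamental theorem of calculus, with the weight $|z|^{1-\theta}$ supplying the extra factor $|x-y|^{1-\theta}$ --- is the same as the paper's, but the central reduction as you describe it would fail. You start from $|f(x)-f(y)|\le\int_0^1|\grad f(x+t(y-x))|\,|x-y|\,dt$ and propose to control this by splitting the segment $[x,y]$ at its midpoint, assigning the half near $x$ to $(M_r^{1-\theta}|\grad f|)(x)$ and the other half to the maximal function at $y$. A one-dimensional line integral of $|\grad f|$ cannot be bounded by ball averages of $|\grad f|$ in dimension $d\ge2$: concentrate $|\grad f|$ (height $A$) in a ball of radius $\delta$ centered on the segment at distance $|x-y|/4$ from $x$; the line integral is of order $A\delta$, while every competitor $\frac1{q^d}\int_{B_q(0)}|z|^{1-\theta}|\grad f(x+z)|\,dz$ is $O(A\delta^d|x-y|^{1-\theta-d})$, so the ratio blows up as $\delta\downarrow0$. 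The correct symmetrization is not in the segment parameter but through an \emph{intermediate point}: write $|f(x)-f(y)|\le|f(x)-f(z)|+|f(z)-f(y)|$, average over $z\in B_Q(x)\cap B_Q(y)$ with $Q=|x-y|$, and only then apply the fundamental theorem of calculus along the segments $[x,z]$ with $z$ ranging over a set of full dimension. It is this $d$-dimensional average in $z$ that, after the change of variables $\hat z=s(z-x)$, produces quantities of the form $\frac1{(sQ)^d}\int_{B_{sQ}(0)}|\hat z|^{1-\theta}|\grad f(x+\hat z)|\,d\hat z$, each dominated by $(M_r^{1-\theta}|\grad f|)(x)$ since $sQ\le r$. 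Your phrase ``average this estimate over $x$ and $y$ ranging in small balls'' gestures at this, but for a pointwise inequality at fixed $x,y$ it is the intermediate point, not $x$ and $y$, that must be averaged; as written, your symmetrization acts on the wrong object.

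A smaller but real error concerns the convergence condition. After rescaling, the scale-$s$ contribution carries the factor $s^{\theta-1}$ (Jacobian $s^{-d}$ and weight rescaling $s^{-(1-\theta)}$ against the normalization $(sQ)^{-d}$), so the relevant quantity is $\int_0^1 s^{\theta-1}\,ds=1/\theta$, finite precisely because $\theta>0$; the condition $1-\theta\ge0$ that you cite is not the one doing the work. Relatedly, your remark that ``the lower-order contributions near $z=0$ are harmless since $1-\theta\ge0$'' has the logic backwards: near the center the weight $|z|^{1-\theta}$ is \emph{small}, so inserting it costs a large factor there; this cost is absorbed only because the telescoping contributions from small scales themselves decay like $s^{\theta}$, which again requires $\theta>0$ and explains why the constant degenerates as $\theta\downarrow0$.
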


\begin{proof}Let $Q:=|x-y|$. We start with a classical argument that replaces a pointwise by an averaged quantity: We first show that
\[
\frac{|f(x)-f(y)|}{|x-y|^{\theta}}\;\lesssim\; \frac1{Q^d}\int_{B_Q(x)} \frac{|f(x)-f(z)|}{|x-z|^{\theta}}\, dz + \frac1{Q^d}\int_{B_Q(y)} \frac{|f(y)-f(z)|}{|y-z|^{\theta}}\, dz.
\]
Indeed, via the triangle inequality and averaging over $B_Q(x)\cap B_Q(y)$, we see that
\begin{eqnarray*}
\lefteqn{|f(x) - f(y)|}\\
&\lesssim& \frac1{Q^d}\int_{B_Q(x)\cap B_Q(y)}|f(x)-f(z)|\, dz + \frac1{Q^d}\int_{B_Q(x)\cap B_Q(y)}|f(y)-f(z)|\, dz\\
&\le&\frac1{Q^d}\int_{B_Q(x)}|f(x)-f(z)|\, dz + \frac1{Q^d}\int_{ B_Q(y)}|f(y)-f(z)|\, dz,
\end{eqnarray*}
and the statement follows from $|x-z|\le Q=|x-y|$ for all $z\in B_Q(x)$. For the statement in \eqref{9a} it thus suffices to show that
\begin{equation}\label{2}
\frac1{Q^d}\int_{B_Q(x)} \frac{|f(x)-f(z)|}{|x-z|^{\theta}}\, dz \;\lesssim\; (M_r^{1-\theta}|\grad f|)(x),
\end{equation}
for all $x\in[0,1)^d$. Without loss of generality, we may assume that $x=0$. Then, using the Fundamental Theorem of Calculus, we estimate
\[
\frac1{Q^d}\int_{B_Q(0)} \frac{|f(0)-f(z)|}{|z|^{\theta}}\, dz \;\le\; \frac1{Q^d}\int_{B_Q(0)} |z|^{1-\theta} \int_0^1 |\grad f(sz)|\, dsdz.
\]
Furthermore, via Fubini and a change of variables, this becomes
\begin{eqnarray*}
\frac1{Q^d}\int_{B_Q(0)} \frac{|f(0)-f(z)|}{|z|^{\theta}}\, dz &\le&\int_0^1 \frac1{s^{1-\theta}}\frac1{(sQ)^d}\int_{B_{sQ}(0)} |\hat z|^{1-\theta} |\grad f(\hat z)|\, d\hat zds\\
&\le&\left(\int_0^1 \frac1{s^{1-\theta}}\, ds\right) (M_r^{1-\theta}|\grad f|)(0),
\end{eqnarray*}
where in the last inequality we have used $sQ\le sr\le r$ and the definition of weighted local maximal functions. In order to deduce \eqref{2}, it remains to observe that the prefactor is bounded for $\theta>0$.
\end{proof}

We finally derive the fundamental inequality for weighted local maximal functions. Observe that this estimate fails in the case $\tau=0$, that is, for classical maximal functions.

\begin{lemma}\label{L4}
Let $f:[0,1)^d\to [0,\infty)$ denote a smooth periodic function, and let $0<r\le 1$ and $0< \tau\le 1$ be two constants. Then there exists a constant $C>0$ depending only on $d$ and $\tau$ such that
\begin{equation}\label{3}
\int |M_r^{\tau} f |\, dx\;\le\; C r^{\tau}\int |f|\, dx.
\end{equation}
\end{lemma}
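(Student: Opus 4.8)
The plan is to prove the weighted weak-type $(1,1)$ estimate for $M_r^\tau$ by hand, via a Vitali covering argument, and then observe that the strong $L^1$ bound $\int |M_r^\tau f| \le C r^\tau \int |f|$ follows \emph{directly} — without interpolation — because the extra weight $|z|^\tau$ supplies a gain of $q^\tau \le r^\tau$ at every scale $q<r$, which compensates for the logarithmic divergence that destroys the ordinary strong $(1,1)$ bound. More precisely, for each $x$ the defining supremum is over balls $B_q(0)$ with $q<r$, and on $B_q(0)$ we have $|z|^\tau \le q^\tau$, so
\[
(M_r^\tau f)(x) \;\le\; r^\tau \,\sup_{0<q<r}\frac{1}{|B_q(0)|}\int_{B_q(0)} f(x+z)\,dz \;=\; r^\tau\,(M_r^0 f)(x),
\]
i.e.\ $M_r^\tau f \le r^\tau M_r^0 f$ pointwise, where $M_r^0$ is the ordinary (uncentered-in-radius, centered-in-basepoint) local maximal function. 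But $M_r^0$ is \emph{not} bounded on $L^1$, so this crude bound is not enough by itself; the point of the weight is that one can do better by retaining the factor $q^\tau$ with the \emph{actual} radius $q$ rather than the worst case $r$.

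So the second, genuine step is: replace $|z|^\tau \le q^\tau$ only after a dyadic decomposition that exploits the smallness of $q^\tau$ on small balls. Split the radii into dyadic annuli $q \in (2^{-k-1}r, 2^{-k}r]$, $k\ge 0$. On this range one has $\frac{1}{|B_q(0)|}\int_{B_q(0)}|z|^\tau f(x+z)\,dz \lesssim (2^{-k}r)^\tau \cdot \frac{1}{|B_{2^{-k}r}(0)|}\int_{B_{2^{-k}r}(0)} f(x+z)\,dz$, so
\[
(M_r^\tau f)(x) \;\lesssim\; \sum_{k\ge 0} (2^{-k}r)^\tau \, A_k(x), \qquad A_k(x) := \frac{1}{|B_{2^{-k}r}(0)|}\int_{B_{2^{-k}r}(0)} f(x+z)\,dz.
\]
Now integrate in $x$ over $[0,1)^d$ and use Fubini on each term: by translation invariance of Lebesgue measure, $\int A_k(x)\,dx = \int |f|\,dx$ for every $k$. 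Hence
\[
\int |M_r^\tau f|\,dx \;\lesssim\; \Big(\sum_{k\ge 0} (2^{-k}r)^\tau\Big)\int |f|\,dx \;=\; r^\tau\Big(\sum_{k\ge 0} 2^{-k\tau}\Big)\int |f|\,dx \;=\; \frac{r^\tau}{1-2^{-\tau}}\int |f|\,dx,
\]
which is exactly \eqref{3} with $C = C(d)/(1-2^{-\tau})$, the constant depending only on $d$ (through the implicit constants comparing $|B_q|$ with $|B_{2^{-k}r}|$ and annuli with balls) and on $\tau$ (through the geometric series). Note this argument also transparently shows why $\tau=0$ fails: the series $\sum_k 2^{-k\tau}$ diverges, reflecting the classical unboundedness of the local maximal operator on $L^1$.

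The main obstacle — really the only nonroutine point — is the passage from the pointwise supremum over a continuum of radii $q\in(0,r)$ to the dyadic sum: one must check that replacing the supremum by a sum over dyadic scales costs only a harmless constant, i.e.\ that for $q\in(2^{-k-1}r,2^{-k}r]$ the normalized integral over $B_q$ is comparable to that over $B_{2^{-k}r}$ \emph{up to the weight}. This is where one uses $|B_q|\sim|B_{2^{-k}r}|$ (ratio bounded by $2^d$) together with the positivity of $f$ to enlarge the domain of integration from $B_q$ to $B_{2^{-k}r}$; the weight $|z|^\tau$ is bounded by $(2^{-k}r)^\tau$ throughout $B_{2^{-k}r}$, so the only loss is the bounded geometric factor. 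Everything else — Fubini, translation invariance, summing a geometric series — is mechanical. Finally, the extension to periodic $f$ rather than $f$ on all of $\R^d$ is immaterial here since all balls have radius $\le r \le 1$ and integration is over one period cell; one may simply view $f$ as a function on $\R^d$ by periodicity, as is implicit throughout the paper.
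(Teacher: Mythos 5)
Your proof is correct, and it takes a mildly but genuinely different route from the paper's. The paper controls $F(x,q):=q^{-d}\int_{B_q(0)}|z|^\tau f(x+z)\,dz$ by differentiating in the radius and integrating back (using $\lim_{q\downarrow0}F(x,q)=0$, which needs $\tau>0$), thereby dominating $F(x,q)$ uniformly in $q$ by a continuous superposition of weighted spherical averages, $\int_0^r Q^{\tau-1}\int_{\partial B_1(0)}f(x+Q\hat z)\,d\Ha^{d-1}(\hat z)\,dQ$; integrating in $x$ then produces the factor $\int_0^r Q^{\tau-1}\,dQ=r^\tau/\tau$. You instead discretize the range of radii into dyadic blocks $q\in(2^{-k-1}r,2^{-k}r]$, bound the supremum by the sum of the dyadic ball averages $(2^{-k}r)^\tau A_k(x)$ (the comparison $|B_q|\sim|B_{2^{-k}r}|$ and the positivity of $f$ costing only $2^d$), and sum the geometric series $\sum_k 2^{-k\tau}=1/(1-2^{-\tau})$ after integrating each $A_k$ by translation invariance. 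The mechanism is identical — the weight contributes $q^\tau$ at scale $q$, and the scales are summable precisely because $\tau>0$ — but your version is more elementary: it avoids differentiating $F$, avoids surface integrals, and does not need the limit $F(x,q)\to0$; the paper's continuous version yields the cleaner explicit constant $|\partial B_1(0)|/\tau$ in place of $2^d/(1-2^{-\tau})$. Both arguments make transparent why the estimate fails at $\tau=0$ ($\int_0^r Q^{-1}\,dQ$ versus $\sum_k 1$ diverging), which is the remark the paper makes before the lemma. Your preliminary observation that the crude bound $M_r^\tau f\le r^\tau M_r^0 f$ is insufficient is a nice sanity check, though not needed for the proof.
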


\begin{proof}
For every $0< q\le r$, we define
\[
F(x,q):= \frac1{q^d} \int_{B_q(0)} |z|^{\tau} f(x+z)\, dz.
\]
Because $\tau>0$, we have that $\lim_{|q|\downarrow0}F(\tacka,q)=0$, and thus the Fundamental Theorem of Calculus yields
\begin{eqnarray*}
\lefteqn{F(x,q)}\\
 &=& \int_0^q \frac{\partial}{\partial Q}F(x,Q)\, dQ\\
&=& \int_0^q\left(-\frac{d}{Q^{d+1}}\int_{B_Q(0)}|z|^{\tau} f(x+z)\, dz + \frac1{Q^d} \int_{\partial B_Q(0)} |z|^{\tau} f(x+z)\, d\Ha^{d-1}(z)\right)\, dQ\\
&\le&\int_0^r \frac1{Q^d}\int_{\partial B_Q(0)} |z|^{\tau} f(x+z)\, d\Ha^{d-1}(z)\, dQ,
\end{eqnarray*}
where in the last inequality we have used that $q\le r$ and $f\ge0$. Via a chance of variables $z=Q\hat z$, we obtain the estimate
\[
F(x,q)\;\lesssim\;  \int_0^r Q^{\tau-1} \int_{\partial B_1(0)}  f(x + Q\hat z)\, d\Ha^{d-1}(\hat z)dQ.
\]
We take the supremum over all $0<q\le r$ on the left hand side and integrate over $[0,1)^d$. Using Fubini and the periodicity of $f$, we obtain
\begin{eqnarray*}
\int |M_r^{\tau} f|\, dx&\lesssim& \int \int_0^r Q^{\tau-1} \int_{\partial B_1(0)}  f(x + Q\hat z)\, d\Ha^{d-1}(\hat z)dQdx\\
&=&   \int_0^rQ^{\tau-1}\, dQ  \int_{\partial B_1(0)}  d\Ha^{d-1}(\hat z) \int |f|\, dx\\
&\lesssim&r^{\tau} \int |f|\, dx.
\end{eqnarray*}
Notice that we have used $\tau>0$ in the last estimate. This proves Lemma \ref{L4}
\end{proof}

\section*{Acknowledgment}The author thanks Charlie Doering and Evelyn Lunasin for bringing this topic to his attention, for enlightening discussions, and for the kind hospitality at the University of Michigan, Ann Arbor. He also acknowledges motivating discussions with Felix Otto.

\bibliography{coarsening}

\begin{thebibliography}{10}

\bibitem{AlbertiChoksiOtto09}
{\sc Alberti, G., Choksi, R., and Otto, F.}
\newblock Uniform energy distribution for an isoperimetric problem with
  long-range interactions.
\newblock {\em J. Amer. Math. Soc. 22\/} (2009), 569--605.

\bibitem{BOS}
{\sc Brenier, Y., Otto, F., and Seis, C.}
\newblock Upper bounds on coarsening rates in demixing binary viscous liquids.
\newblock {\em SIAM J. Math. Anal. 43}, 1 (2011), 114--134.

\bibitem{Bressan03}
{\sc Bressan, A.}
\newblock A lemma and a conjecture on the cost of rearrangements.
\newblock {\em Rend. Sem. Mat. Univ. Padova 110\/} (2003), 97--102.

\bibitem{ContiNiethammerOtto06}
{\sc Conti, S., Niethammer, B., and Otto, F.}
\newblock Coarsening rates in off-critical mixtures.
\newblock {\em SIAM J. Math. Anal. 37}, 6 (2006), 1732--1741 (electronic).

\bibitem{CrippaDeLellis}
{\sc Crippa, G., and {De Lellis}, C.}
\newblock Estimates and regularity results for the {D}i{P}erna-{L}ions flow.
\newblock {\em J. Reine Angew. Math. 616\/} (2008), 15--46.

\bibitem{DoeringThiffeault06}
{\sc Doering, C.~R., and Thiffeault, J.-L.}
\newblock Multiscale mixing efficiencies for steady sources.
\newblock {\em Phys. Rev. E 74\/} (Aug 2006), 025301.

\bibitem{BubanovCortelezzi10}
{\sc Gubanov, O., and Cortelzzi, L.}
\newblock Towards the design of an optimal mixer.
\newblock {\em Journal of Fluid Mechanics 651\/} (5 2010), 27--53.

\bibitem{KohnOtto02}
{\sc Kohn, R.~V., and Otto, F.}
\newblock Upper bounds on coarsening rates.
\newblock {\em Comm. Math. Phys. 229}, 3 (2002), 375--395.

\bibitem{LinThiffeaultDoering11}
{\sc Lin, Z., Thiffeault, J.-L., and Doering, C.~R.}
\newblock Optimal stirring strategies for passive scalar mixing.
\newblock {\em J. Fluid Mech. 675\/} (2011), 465--476.

\bibitem{Liu08}
{\sc Liu, W.}
\newblock Mixing enhancement by optimal flow advection.
\newblock {\em SIAM J. Control Optim. 47}, 2 (2008), 624--638.

\bibitem{Lunasin12}
{\sc Lunasin, E., Lin, Z., Novikov, A., Mazzucato, A., and Doering, C.~R.}
\newblock Optimal mixing and optimal stirring for fixed energy, fixed power, or
  fixed palenstrophy flows.
\newblock {\em J. Math. Phys. 53}, 11 (2012), 115611, 15.

\bibitem{MathewMezicGrivopoulos07}
{\sc Mathew, G., Mezi{\'c}, I., Grivopoulos, S., Vaidya, U., and Petzold, L.}
\newblock Optimal control of mixing in {S}tokes fluid flows.
\newblock {\em J. Fluid Mech. 580\/} (2007), 261--281.

\bibitem{MathewMezicPetzold05}
{\sc Mathew, G., Mezi{\'c}, I., and Petzold, L.}
\newblock A multiscale measure for mixing.
\newblock {\em Phys. D 211}, 1-2 (2005), 23--46.

\bibitem{ModicaMortola}
{\sc Modica, L., and Mortola, S.}
\newblock Un esempio di {$\Gamma ^{-}$}-convergenza.
\newblock {\em Boll. Un. Mat. Ital. B (5) 14}, 1 (1977), 285--299.

\bibitem{OttoRumpSlepcev06}
{\sc Otto, F., Rump, T., and Slep{\v{c}}ev, D.}
\newblock Coarsening rates for a droplet model: rigorous upper bounds.
\newblock {\em SIAM J. Math. Anal. 38}, 2 (2006), 503--529 (electronic).

\bibitem{OSS}
{\sc Otto, F., Seis, C., and Slep{\v{c}}ev, D.}
\newblock Crossover of the coarsening rates in demixing of binary viscous
  liquids.
\newblock {\em Commun. Math. Sci. 11}, 2 (2013), 441--464.

\bibitem{ShawThiffeaultDoering07}
{\sc Shaw, T.~A., Thiffeault, J.-L., and Doering, C.~R.}
\newblock Stirring up trouble: Multi-scale mixing measures for steady scalar
  sources.
\newblock {\em Physica D: Nonlinear Phenomena 231}, 2 (2007), 143 -- 164.

\bibitem{Slepcev08}
{\sc Slep{\v{c}}ev, D.}
\newblock Coarsening in nonlocal interfacial systems.
\newblock {\em SIAM J. Math. Anal. 40}, 3 (2008), 1029--1048.

\bibitem{Stein70}
{\sc Stein, E.~M.}
\newblock {\em Singular integrals and differentiability properties of
  functions}.
\newblock Princeton Mathematical Series, No. 30. Princeton University Press,
  Princeton, N.J., 1970.

\bibitem{Villani03}
{\sc Villani, C.}
\newblock {\em Topics in optimal transportation}, vol.~58 of {\em Graduate
  Studies in Mathematics}.
\newblock American Mathematical Society, Providence, RI, 2003.

\bibitem{Villani09}
{\sc Villani, C.}
\newblock {\em Optimal transport}, vol.~338 of {\em Grundlehren der
  Mathematischen Wissenschaften [Fundamental Principles of Mathematical
  Sciences]}.
\newblock Springer-Verlag, Berlin, 2009.
\newblock Old and new.

\end{thebibliography}
\bibliographystyle{acm}
\end{document}